\newcommand{\new}{\mathsf{new}}
\newcommand{\pic}{$\pi$-calculus}
\newcommand{\ldb}{[\![}
\newcommand{\rdb}{]\!]}
\newcommand{\wbbisim}{\stackrel{\centerdot}{\approx}} 
\newcommand{\id}[1]{\texttt{#1}}
\newcommand{\pzero}{\mathbin{0}}
\newcommand{\juxtap}{\mathbin{\id{|}}}
\newcommand{\scong}{\mathbin{\equiv}}
\newcommand{\nameeq}{\mathbin{\equiv_N}}
\newcommand{\names}[1]{\mathbin{\mathcal{N}(#1)}}
\newcommand{\freenames}[1]{\mathbin{\mathsf{FN}(#1)}}
\newcommand{\boundnames}[1]{\mathbin{\mathsf{BN}(#1)}}
\newcommand{\binpar}[2]{#1 | #2}
\newcommand{\outputp}[2]{#1!(#2)}
\newcommand{\prefix}[3]{\mathsf{for}(#2 \leftarrow #1?) #3}
\newcommand{\quotep}[1]{\mathsf{@}#1}
\newcommand{\dropn}[1]{\mathsf{*}#1}
\newcommand{\procn}[1]{\stackrel{\vee}{x}}
\newcommand{\substn}[2]{\{ #1 / #2 \}}
\newcommand{\meaningof}[1]{\ldb #1 \rdb}
\newcommand{\Proc}{\mathsf{Proc}}
\newcommand{\QProc}{\quotep{\mathsf{Proc}}}
\newcommand{\bc}{\mathbin{\mathbf{::=}}}
\newcommand{\bm}{\mathbin{\mathbf\mid}}
\newcommand{\red}{\rightarrow}
\newcommand{\wred}{\Rightarrow}
\newcommand{\vect}[1]{\stackrel{\rightharpoonup}{#1}}
\newcommand{\rhoc}{$\rho$-calculus}
\gdef\tshortstack{\@ifnextchar[\@tshortstack{\@tshortstack[c]}}
\gdef\@tshortstack[#1]{%
  \leavevmode
  \vtop\bgroup
    \baselineskip-\p@\lineskip 3\p@
    \let\mb@l\hss\let\mb@r\hss
    \expandafter\let\csname mb@#1\endcsname\relax
    \let\\\@stackcr
    \@ishortstack}
\title{Name-free combinators for concurrency}
\author{
L.G. Meredith\inst{1}\\
\and
Michael Stay\inst{2}\\
}
\institute{
  {RChain Cooperative}\\
  \email{\fontsize{8}{8}\selectfont greg@rchain.coop}
  \and
  {Pyrofex Corp.}\\
  \email{\fontsize{8}{8}\selectfont stay@pyrofex.net}\\
}
\begin{document}
\maketitle
\begin{abstract}
\noindent
  Yoshida demonstrated how to eliminate the bound names coming from
  the input prefix in the asynchronous {\pic}, but her combinators
  still depend on the $\new$ operator to bind names.
  We modify Yoshida's combinators by replacing $\new$ and replication
  with reflective operators to provide the first combinator calculus
  with no bound names into which the asynchronous {\pic} has a faithful
  embedding.  
\end{abstract}

\section{Introduction}

Many term calculi, like $\lambda$-calculus or {\pic}, involve binders
for names, and the mathematics of bound variable names is
subtle. Sch\"onfinkel introduced the SKI combinator calculus in 1924
to clarify the role of quantified variables in intuitionistic logic by
eliminating them \cite{finkel}; Curry developed Sch\"onfinkel's ideas
much further. The difficulties are not merely theoretical, but
represent a real practical challenge in the design of programming
languages. In fact binding is one of the key features of
the PoPLMark Challenge \cite{PoPLMark}. Certainly, the recent work by Jamie
Gabbay and Andrew Pitts \cite{DBLP:journals/fac/GabbayP02} and others
\cite{DBLP:journals/jcss/Clouston14} on nominal set theory has put the
study of bound names and substitution on a much nicer foundation that
can be shown to extend to practical implementations. However, it
introduces an intriguing conundrum.

Specifically, the work of Gabbay and Pitts relies on a version of set
theory (Fraenkel-Mostowski set theory, which we abbreivate to FM set
theory) that admits an infinite supply of ``atoms''. This raises the
question of where these atoms come from, which has both theoretical
and practical implications. On the practical side, infinite sets of
``atomic'' entities, i.e. entities with no internal structure, are not
realizable on modern computers. Modern computers fundamentally rely on
sets of elements with effective internal structure to provide the kind
of compression necessary to produce or compute with infinite sets. The
natural numbers is a prime example. Because of their very regular
internal structure the entire set can be represented by a single
recursive equation. Potentially then, the natural numbers or some
other effectively representable set could provide the source of atoms
used in an FM-set theoretic account of binding in practical
implementations. However, this raises a new question.

In order to represent and compute these effectively representable sets
a notion of computation must already be in place. If that notion of
computation relies on a notion of binding, then not a lot progress has
been made! As in \cite{DBLP:journals/entcs/MeredithR05} we argue that
this circularity, instead of being an obstacle to overcome, might be a
clue to an alternative approach to binding phenomena; and one that
ties together two important computational phenomena that have not
normally been considered as related. More explicitly, we extend the
argument made by Meredith and Radestock that reflection suffices to
provide the ``atoms'' used in the {\pic} as channels to produce the
first name-free set of combinators that enjoys a full and faithful
interpretation of the calculus.

While the focus of the paper is largely on the technical results it is
useful to consider the larger context motivating them. Reflection and
meta-programming features more generally are part of a growing number
of mainstream languages. Java, C\#, even the Haskell and OCaml
communities have seen growing interest in these features with efforts
like template Haskell and MetaOCaml, respectively. In large measure
this has to do the fact that programming at industrial scale requires
the leverage of computer programs to write computer programs. Thus,
meta-programming features are simply a practical necessity. On the
other hand, reflection and meta-programming have not received a
theoretical account that fits well with strong typing. This is one of
the reasons why language designs based on typed $\lambda$-calculi have
been so slow to adopt reflection as a feature by comparison to other
language designs.

In this setting, the idea that a single feature already enjoying
widespread adoption could account for such a subtle phenomenon as the
kind of binding found in the {\pic} is both intriguing and worth
exploring, even if FM set theory provides a satisfying account of
computation with nominal phenomena in other respects. However, it is
precisely the foundational theoretical questions raised by the FM set
theoretic approach that motivates the investigation in the first place:
what better place to look for the source of ``atoms'' than in the
reification of theory of computation requiring them? 

To be clear, we are not focusing on ordinary abstraction, as that is
well solved by abstraction elimination from the $\lambda$-calculus to
$\mathcal{SKI}$. Instead we are focused on binders for fresh
names. For example, the {\pic} (\cite{milner91polyadicpi}) has two
binders for names: the $\new$ operator, which introduces a new name
into scope, and the input prefix, which introduces a name for labeling
locations for substitution.  Yoshida
\cite{DBLP:journals/tcs/Yoshida02} describes an elimination algorithm
that gets rid of input prefixes which corresponds in many respects to
the elimination of lambda abstraction; but her combinators still
fundamentally depend on the $\new$ operator.  In complementary work,
Meredith and Radestock \cite{DBLP:journals/entcs/MeredithR05}
introduce reflective operators into a higher-order {\pic} and implment
$\new$ and replication in terms of reflection.  Here, we present a
fusion of those ideas: a name-free concurrent combinator calculus into
which Yoshida's combinators have a faithful embedding.

\subsubsection{Role of compositionality in theories of concurrency} 
It is also interesting how compositionality plays out in this
setting. As we emphasize throughout the paper, the {\pic\;} is not a
closed theory, but one depending on a theory of names. In other words,
a fully specified theory of processes involves a composition of
theories, namely \footnote{pun gratefully accepted} an application of
the {\pic\;} with some theory of names. In this sense we derive the {\rhoc}
from the fixed point of applying the theory of processes to itself
(regarded as a theory of names). Symbolically,

\begin{equation*}
  \mathcal{R} = \Pi[\quotep{\mathcal{R}}]
\end{equation*}

where $\Pi$ produces a theory of processes from whatever it is given
as a theory of names and $\quotep{\mathcal{R}}$ is the quoted forms of
said processes, regarding them as names.

This is exactly the kind of design level thinking that
compositionality should promote, and it is actually surprising that
this particular solution for a name-free version of a concurrent
calculus wasn't found sooner. In point of fact, the initial
implementations of {\rhoc} in \textsf{OCaml}, \textsf{Haskell}, and
\textsf{Scala} all used exactly this fixed point formulation to define the
syntax for the {\rhoc}. The fact that the code type-checks
and that the types are inhabited provides some assurance of the
soundness of the construction.

The combinator versions take compositionality a step further by
removing even more of the syntax, effectively arriving at a variant of
an applicative algebra over a small handfull of combinators to account
for all binding and mobility phenomena. Further still, the soundness
of the semantics makes essential use of compositionality because we
are effectively composing Yoshida's original semantics with the
reflective account of name construction and deconstruction; thus
illustrating that compositionality is not just for design-level
thinking, but provides powerful compression of proofs.

\subsubsection{Outline of the paper}
To be fully self-contained this paper would need to present four
different calculi and two different encodings: the original {\pic},
Yoshida's combinator calculus, and the encoding from the term calculus
to the combinator; the {\rhoc}, and the encoding from the {\pic\;} to
the {\rhoc} and the new reflective combinator calculus, and the
encoding from the term calculus to the reflective combinator
calculus. Such a manifest provides all the technical inventory to
illustrate how the two encoding techniques, prefix elimination and
$\new$ elimination, combine and how the encoding from the {\pic\;} can
be constructed by composing the encoding into the {\rhoc} with the
encoding into the reflective combinator calculus. We have provided the
complete manifest, but have pushed the presentation of the {\rhoc} and
the {\pic\;} and the encoding of the latter into the former in an
appendix at the end as the {\pic\;} is quite well known at this point,
and the {\rhoc} has been part of the concurrency literature for over a
decade. This organization allows us to focus on the newer results of a
name-free combinator calculus for mobile concurrency.

\subsubsection{Related work}

As mentioned previously, there is a long history of interest in
combinatorial presentations of computational models, with Yoshida's
work representing a seminal development for concurrent computation. We
have also recently become aware of
\cite{DBLP:journals/toplas/RajaS97}. They achieve a similar goal, but
without reflection, using a technique pioneered by Quine, who was also
a pioneer of reflective techniques. \cite{Quine59} \cite{Quine60}

\section{A reflective higher-order concurrent combinator calculus}

\subsection{Yoshida's original combinator calculus}

This is the briefest account of Yoshida's original calculus while her
paper, which is nearly 20 years old, is still an outstanding piece of
research and well worth the effort to read in conjunction with these results.

\begin{mathpar}
  \inferrule* [lab=atom] {} { P \bc 0 \;|\; \mathsf{m}(a,b) \;|\; \mathsf{d}(a,b,c) \;|\; \mathsf{k}(a) \;|\; \mathsf{fw}(a,b) \;|\; \mathsf{b}_{\mathsf{r}}(a,b) \;|\; \mathsf{b}_{\mathsf{l}}(a,b) \;|\; \mathsf{s}(a,b,c) }
  \and
  \inferrule* [lab=process] {} {\bm \; (\new\; a)P \;|\; P|P \;|\; \mathsf{*}P}
\end{mathpar}

We write $\mathsf{c};\mathsf{c'};\ldots;$ to denote these
agents. $\mathsf{m}(a,b)$ (message) carries $a$ name $b$ to name $a$,
$\mathsf{d}$ (duplicator) distributes a message to two locations,
$\mathsf{fw}$ (forwarder) forwards a message (thus linking two
locations), $\mathsf{k}$ (killer) kills a message, while
$\mathsf{b}_{\mathsf{r}}$ (right binder), $\mathsf{b}_{\mathsf{l}}$
(left binder) and $\mathsf{s}$ (synchroniser) generate new links. In
particular $\mathsf{b}_{\mathsf{r}}$ and $\mathsf{b}_{\mathsf{l}}$
represent two different ways of binding names – in
$\mathsf{b}_{\mathsf{r}}$ one uses the received name for output, while
in $\mathsf{b}_{\mathsf{l}}$ one uses it for input. In contrast,
$\mathsf{s}$ is used for pure synchronisation without value passing,
which is indeed necessary in interaction scenarios.

As in the {\pic}, the $\new$ operator is a binding operator
for names, so Yoshida's calculus also has a notion of free and bound names.

\begin{mathpar}
  \freenames{\pzero} := \emptyset
  \and
  \freenames{\mathsf{k}(a)} := \{ a \}
  \and
  \freenames{\mathsf{m}(a,b)} = \freenames{f(a,b)} = \freenames{\mathsf{b}_{\mathsf{r}}(a,b)} = \freenames{\mathsf{b}_{\mathsf{l}}(a,b)} := \{ a, b \}
  \and
  \freenames{\mathsf{d}(a,b,c)} = \freenames{\mathsf{s}(a,b,c)} := \{ a, b, c \}
  \and
  \freenames{(\new\;a)P} := \freenames{P} \setminus \{ a \}
  \and
  \freenames{P|Q} := \freenames{P} \cup \freenames{Q}
  \and
  \freenames{!P} := \freenames{P}
\end{mathpar}

The bound names of a process, $\boundnames{P}$, are those names occurring in $P$
that are not free. For example, in $(\new\; b)\mathsf{m}(a,b)$, the name $a$ is free, while $b$ is bound.

In the following definition, $\vect{x}$ indicates a list of names,
$u:\vect{x}$ indicates the concatenation of $u$ onto the vector, and
abuse set notation $u \in \vect{x}$ to assert or require that $u$
occurs in $\vect{x}$.

\begin{definition}
Two processes, $P,Q$, are alpha-equivalent if $P = Q\{\vect{y}/\vect{x}\}$ for
some $\vect{x} \in \boundnames{Q},\vect{y} \in \boundnames{P}$, where $Q\{\vect{y}/\vect{x}\}$
denotes the capture-avoiding substitution of $\vect{y}$ for $\vect{x}$ in $Q$.
\end{definition}

\begin{definition}
  The {\em structural congruence} $\equiv$
  between processes \cite{SangiorgiWalker} is the least congruence containing
  alpha-equivalence and satisfying the commutative monoid laws
  (associativity, commutativity and $\pzero$ as identity) for parallel
  composition $|$.
\end{definition}

Rewrite rules
\[\begin{array}{rl}
  \mathsf{d}(a,b,c) | \mathsf{m}(a,x) & \red \mathsf{m}(b,x) | \mathsf{m}(c,x) \\
  \mathsf{k}(a) | \mathsf{m}(a,x) & \red 0 \\
  \mathsf{fw}(a,b) | \mathsf{m}(a,x) & \red \mathsf{m}(b,x) \\
\end{array} \quad \quad
\begin{array}{rl}
  \mathsf{b}_{\mathsf{r}}(a,b) | \mathsf{m}(a,x) & \red \mathsf{fw}(b,x) \\
  \mathsf{b}_{\mathsf{l}}(a,b) | \mathsf{m}(a,x) & \red \mathsf{fw}(x,b) \\
  \mathsf{s}(a,b,c) | \mathsf{m}(a,x) & \red \mathsf{fw}(b,c)
\end{array}\]
\[\begin{array}{rl}
  \mathsf{*}P & \red P|\mathsf{*}P \\
\end{array}\]
\begin{mathpar}
  \inferrule* {{P} \red {P}'} {{{P} | {Q}} \red {{P}' | {Q}}}
  \and
  \inferrule* {{{P} \scong {P}'} \andalso {{P}' \red {Q}'} \andalso {{Q}' \scong {Q}}}{{P} \red {Q}}
\end{mathpar}

\subsubsection{Translating the {\pic} into Yoshida's combinators}
We assume the following annotations ($+$ stands for the output and $-$
stands for the input), which denote how each name is used in the rules
of interaction:

\[\mathsf{m}(a^{+},v^{\pm});\mathsf{d}(a^{-},b^{+},c^{+});\mathsf{k}(a^{-});\mathsf{fw}(a^{-},b^{+});\mathsf{b}_{\mathsf{l}}(a^{-}b^{+});\mathsf{b}_{\mathsf{r}}(a^{-}b^{-});\mathsf{s}(a^{-}b^{-}c^{+})\]

Note the annotated polarities are preserved by reduction, e.g.
\[\binpar{\mathsf{d}(a^{-},b^{+},c^{+})}{\mathsf{m}(a^{+},v)} \to \binpar{\mathsf{m}(b^{+},v)}{\mathsf{m}(b^{+},v)}\]

It is worth pointing out that we use a slightly different syntax for input-guarded processes. Where most readers familiar with $x?(y)P$ we write $\mathsf{for}(y \leftarrow x)P$, not only to be more instep with modern programming languages, but also because it generalizes more naturally to join constructions, such as $\mathsf{for}(y_1 \leftarrow x_1; \ldots; y_n \leftarrow x_n)P$.

\begin{mathpar}
  \meaningof{a!(b)} = \mathsf{m}(a,b)
  \and
  \meaningof{\mathsf{for}(x \leftarrow a)P} = \mathsf{for}^*(x \leftarrow a)\meaningof{P}
  \and
  \meaningof{P|Q} = \meaningof{P}|\meaningof{Q}
  \and
  \meaningof{(\mathsf{new}\;x)P} = (\mathsf{new}\;x)\meaningof{P}
  \and
  \meaningof{\mathsf{*}P} = \mathsf{*}\meaningof{P}
  \and
  \meaningof{0} = 0
\end{mathpar}

where

\[\begin{array}{llrl}
(I)&\mathsf{for}^*(x \leftarrow a)(P|Q) &=& (\mathsf{new}\; c_1c_2)(\mathsf{d}(a,c_1,c_2) | \mathsf{for}^*(x \leftarrow c_1)P | \mathsf{for}^*(x \leftarrow c_2)Q)\\
(II)&\mathsf{for}^*(x \leftarrow a)(\mathsf{new}\; c')P &=& (\mathsf{new}\; c)\mathsf{for}^*(x \leftarrow a)P\substn{c}{c'} \\
(III)& \mathsf{for}^*(x \leftarrow a)\pzero &=& \mathsf{k}(a) \\
(IV)&\mathsf{for}^*(x \leftarrow a)\mathsf{*}P &=& (\mathsf{new}\;c)(\mathsf{fw}(a,c) | \mathsf{*}\mathsf{for}^*(x \leftarrow c)(P | \mathsf{m}(c,x)))\\
(V)&\mathsf{for}^*(x \leftarrow a)c(v^+,\vec{w}) &=& (\mathsf{new}\;c)(\mathsf{s}(a,c,v) | c(c^+,\vec{w})) \mbox{\quad $x \notin \{v\vec{w}\}$}\\
(VI)&\mathsf{for}^*(x \leftarrow a)c(v^-,\vec{w})&=& (\mathsf{new}\;c)(\mathsf{s}(a,v,c) | c(c^-,\vec{w})) \mbox{\quad $x \notin \{v\vec{w}\}$}\\
(VII)&\mathsf{for}^*(x \leftarrow a)\mathsf{m}(v,x)&=& \mathsf{fw}(a,v) \mbox{\quad $x \neq v$}\\
(VIII)&\mathsf{for}^*(x \leftarrow a)\mathsf{fw}(x,v)&=& \mathsf{b}_{\mathsf{l}}(a,v) \mbox{\quad $x \neq v$}\\
(IX)&\mathsf{for}^*(x \leftarrow a)\mathsf{fw}(v,x)&=& \mathsf{b}_{\mathsf{r}}(a,v) \mbox{\quad $x \neq v$}\\
(X)&\mathsf{for}^*(x \leftarrow a)c(\vec{v}_1x^+\vec{v}_2)&=&(\mathsf{new}\; c)\mathsf{for}^*(x \leftarrow a)(\mathsf{fw}(c,x) | c(\vec{v}_1c\vec{v}_2)) \mbox{\quad $x \notin \vec{v}_1$}\\
(XI)&\mathsf{for}^*(x \leftarrow a)c(x^-\vec{v})&=& (\mathsf{new}\; c)\mathsf{for}^*(x \leftarrow a)(\mathsf{fw}(x,c) | c(c\vec{v}))\\
(XII)&\mathsf{for}^*(x \leftarrow a)\mathsf{b}_{\mathsf{r}}(v,x^-)&=& (\mathsf{new}\; c_1c_2c_3)\mathsf{for}^*(x \leftarrow a)(\mathsf{d}(v,c_1,c_2)|\mathsf{s}(c_1,x,c_3)|\mathsf{b}_{\mathsf{r}}(c_2,c_3)) \mbox{\quad $x \notin \{v \}$}\\
(XIII)&\mathsf{for}^*(x \leftarrow a)\mathsf{s}(v,x^-,w) &=& (\mathsf{new}\; c_1c_2)\mathsf{for}^*(x \leftarrow a)(\mathsf{s}(v,c_1,c_2) | \mathsf{m}(c_1,x) | \mathsf{b}_{\mathsf{l}}(c_2,w)) \mbox{\quad $x \notin \{v \}$}
\end{array}\]

\subsection{Reflective higher-order (RHO) combinator calculus}
The {\pic} is not a closed theory, but rather a theory dependent upon
some theory of names. Taking an operational view, one may think of the
{\pic} as a procedure that when handed a theory of names provides a
theory of processes that communicate over those names. This openness
of the theory has been exploited in {\pic} implementations like the
execution engine in Microsoft's Biztalk \cite{biztalk}, where an
ancillary binding language provides a means of specifying a `theory'
of names: {\em e.g.}, names may be TCP/IP ports, or URLs, or object
references, {\em etc.}  But foundationally, one might ask if there is
a closed theory of processes, {\em i.e.} one in which the theory of
names arises from and is wholly determined by the theory of
processes. Meredith and Radestock have shown that this is not only
possible, but results in a calculus that enjoys both the features of
concurrency and meta-programming
\cite{DBLP:journals/entcs/MeredithR05}. The key idea is to provide the
ability to quote processes, effectively reifying them as names, and to
unquote them, effectively reflecting names back as processes.

The same technique can be applied to Yoshida's combinators. We remove
new names and replication, introduce quoting/unquoting
operators. Notice that this effectively allows processes in the second
argument of a send, because the name in the second argument is
`merely' a quoted process. This affords an opportunity to introduce an
extra rewrite governing the interaction between sending and unquoting.

\begin{mathpar}
  \inferrule* [lab=atom] {} { P \bc 0 \;|\; \mathsf{m}(a,\quotep{P}) \;|\; \mathsf{d}(a,b,c) \;|\; \mathsf{k}(a) \;|\; \mathsf{fw}(a,b) \;|\; \mathsf{b}_{\mathsf{r}}(a,b) \;|\; \mathsf{b}_{\mathsf{l}}(a,b) \;|\; \mathsf{s}(a,b,c) }
  \and
  \inferrule* [lab=process] {} {\bm \; *a \;|\; P|P}
  \and
  \inferrule* [lab=nominal] {} {a \bc \quotep{P}}
\end{mathpar}

Rewrite rules
\[\begin{array}{rl}
\mathsf{d}(a,b,c) | \mathsf{m}(a,\quotep{P}) & \red \mathsf{m}(b,\quotep{P}) | \mathsf{m}(c,\quotep{P}) \\
\mathsf{k}(a) | \mathsf{m}(a,\quotep{P}) & \red 0 \\
\mathsf{fw}(a,b) | \mathsf{m}(a,\quotep{P}) & \red \mathsf{m}(b,\quotep{P}) \\
\mathsf{b}_{\mathsf{r}}(a,b) | \mathsf{m}(a,\quotep{P}) & \red \mathsf{fw}(b,\quotep{P}) \\  
\end{array} \quad \quad
\begin{array}{rl}
  \mathsf{b}_{\mathsf{l}}(a,b) | \mathsf{m}(a,\quotep{P}) & \red \mathsf{fw}(\quotep{P},b) \\
  \mathsf{s}(a,b,c) | \mathsf{m}(a,\quotep{P}) & \red \mathsf{fw}(b,c) \\
  *(a) | \mathsf{m}(a,\quotep{P}) & \red P
\end{array}\]
\begin{mathpar}
  \inferrule* {{P} \red {P}'} {{{P} | {Q}} \red {{P}' | {Q}}}
  \and
  \inferrule* {{{P} \scong {P}'} \andalso {{P}' \red {Q}'} \andalso {{Q}' \scong {Q}}}{{P} \red {Q}}
\end{mathpar}

\begin{definition}
  The {\em structural congruence} $\equiv$
  between processes \cite{SangiorgiWalker} is the least congruence
  satisfying the commutative monoid laws
  (associativity, commutativity and $\pzero$ as identity) for parallel
  composition $|$ and $*(@(P)) \equiv P$.
\end{definition}

Note that alpha equivalence is no longer part of structural
congruence.  While there is a faithful embedding of Yoshida's
combinators into RHO combinators (see below), RHO combinators can see
the internal structure of names and distinguish them.

\subsubsection{Implementing replication with reflection}

%
%
%
%
%
As mentioned before, it is known that replication (and hence
recursion) can be implemented in a higher-order process algebra
\cite{SangiorgiWalker}. As our first example of calculation with the
machinery thus far presented we give the construction explicitly in
the RHO combinator calculus.

\begin{definition}[Replication]
  \label{replication}
  $D(x,v,w) := (\mathsf{d}(x,v,w) | \mathsf{fw}(v,x) | {*}(w))$
\end{definition}
\[\begin{array}{rl}
  \mathsf{*}_{(x,v,w)} P &= \mathsf{m}(x,\quotep{(D(x,v,w) \; |\; P)}) \; |\; D(x,v,w) \\
        &= \mathsf{d}(x,v,w) \; |\; \mathsf{fw}(v,x) \; |\; {*}(w) \; |\; \mathsf{m}(x,\quotep{(D(x,v,w) \; |\; P)}) \\
        &\red \mathsf{m}(v,\quotep{(D(x,v,w) \; |\; P)}) \; |\; \mathsf{m}(w,\quotep{(D(x,v,w) \; |\; P)}) \; |\; \mathsf{fw}(v,x) \; |\; {*}(w) \; |\; \mathsf{m}(x,\quotep{(D(x,v,w) \; |\; P)}) \\
        &\red \mathsf{m}(x,\quotep{(D(x,v,w) \; |\; P)}) \; |\; \mathsf{m}(w,\quotep{(D(x,v,w) \; |\; P)}) \; |\; {*}(w) \\
        &\red \mathsf{m}(x,\quotep{(D(x,v,w) \; |\; P)}) \; |\; D(x,v,w) \; |\; P \\
        & = \; \mathsf{*}_{(x,v,w)} P \; |\; P
\end{array}\]

Of course, this encoding, as an implementation, runs away, unfolding
$\mathsf{*}P$ eagerly. It is possible to obtain a lazier
replication operator restricted to the embedding of
input-guarded $\pi$ processes. The reader familiar with the
$\lambda$-calculus will have noticed the similarity between $D$ and
the ``paradoxical'' or ``fixed point'' combinator $Y$.

\subsubsection{Implementing new names with reflection}

Here we provide an encoding of Yoshida's combinator calculus into the
RHO combinator calculus. Since all names are global in the RHO
combinator calculus, we encounter a small complication in the
treatment of free names at the outset. There are several ways to
handle this.  One is to insist that the translation be handed a closed
program, one in which all names are bound either by input or by
restriction, but this feels inelegant. Another is to provide
an environment, $r : \mathcal{N}_{\mbox{\tiny Yoshida}} \rightarrow \QProc$, for
mapping the free names in a Yoshida process into names in the RHO
combinator calculus. Maintaining the updates to the environment,
however, obscures the simplicity of the translation. We adopt a third
alternative.

To hammer home the point that Yoshida's combinator calculus is
parameterized in a theory of names, we instantiate her calculus with
the names of the RHO combinator calculus. This is no different than
instantiating her calculus using the natural numbers, or the set of URLs as the
set of names. Just as there is no connection between the structure of
these kinds of names and the structure of processes in the {\pic},
there is no connection between the processes quoted in the names used
by the theory and the processes generated by the theory, and we
exploit this fact.

Let $\QProc$ be set of names in the RHO combinator calculus, $\Proc$
be the set of terms of the RHO combinator calculus, and
$\Proc_{\mbox{\tiny Yoshida}}$ be the set of terms of her combinator
calculus built using $\QProc$ \emph{as the names}. The translation will be
given in terms of a function \[\meaningof{-}_2( -, - ) : 
    \Proc_{\mbox{\tiny Yoshida}} \times \QProc {\times} \QProc \red \Proc.\] 
The guiding intuition is that we construct alongside the process a distributed memory
allocator, the process' access to which is mediated through the second argument
to the function, called $p$ below. The first argument, called $n$ below, determines the shape of the memory for the given allocator.

Since Yoshida's calculus is parametric in a set of names, we can
choose $\QProc$ for that set.  Given a process $P$ in Yoshida's
calculus, we pick names $n$ and $p$ in the RHO calculus such that $n \neq p$
and both are distinct from the free names of $P$.  Then we define

\begin{equation*}
  \meaningof{P} = \meaningof{P}_2(n, p),
\end{equation*}

Name allocation will make heavy use of the following two name
constructors

\begin{eqnarray*}
  x^l & := & \quotep{\mathsf{b}_{\mathsf{l}}(x,\mathsf{m}(x,*x))} \\
  x^r & := & \quotep{\mathsf{b}_{\mathsf{r}}(x,\mathsf{m}(x,*x))}
\end{eqnarray*}

Note that by construction, $\quotep{P}$ cannot occur as a name in $P$
and hence any name derived from a process that is built using
$\quotep{P}$ cannot occur in $P$. Thus, the effect of the superscripts
$l$ and $r$ on a name $x$ is to construct a name that is guaranteed to
be fresh with respect to the free names of the process being
interpreted. More generally, mentioning a name, say $n$, in the
constructor of a another name, say $n'$, guarantees distinction
between $n$ and $n'$; likewise, mentioning a process, say $P$, in the
constructor of a name $n$ guarantees that $n$ is fresh in $P$. The
particular choices of combinators used in the name
constructors are irrelevant for the purposes of freshness. We
make heavy use of this fact in our interpretation of prefix
elimination.

The interpretation function $\meaningof{-}_2(n, p)$ is straightforward
for all but replication and $\mathsf{new}$.

\begin{eqnarray*}
    \meaningof{\pzero}_2 (n,p)
      & = &
       \pzero \\
    \meaningof{\emph{c}(\vect{a})}_2 (n,p) 
      & = & 
      \emph{c}(\vect{a})\mbox{, where $\emph{c}$ is any combinator but $m$} \\
    \meaningof{\mathsf{m}(a,\quotep{P})}_2 (n,p) 
      & = & 
          \mathsf{m}(a, \quotep{\meaningof{P}}_2 (n,p)) \\
    \meaningof{P \juxtap Q}_2 (n,p) 
      & = & 
    \meaningof{P}_2 (n^l, p^l)
         \juxtap \meaningof{Q}_2 (n^r, p^r) \\ 
\end{eqnarray*}

These latter two forms require extra care. We define them in terms of
a prefix form and then use a version of Yoshida's prefix elimination
to remove the prefix.

\begin{eqnarray*}
    \meaningof{\mathsf{*} P}_2 (n,p)
          & = & \binpar{\mathsf{m}(x, \quotep{\meaningof{P}_3(n^r,p^r)})}
                  {\binpar{D(x,v,w)}
                    {\binpar{\mathsf{m}(n^r, *n^l)}{\mathsf{m}(n^r, *n^l)}}} \\
                  & & \mbox{where } 
                      x = @(\meaningof{P}_2(n,p))^{ll}, 
                      v = @(\meaningof{P}_2(n,p))^{lr}, \mbox{ and }
                      w = @(\meaningof{P}_2(n,p))^{rr} \\
    \meaningof{(\new \; x ) P}_2 (n, p) 
          & = & 
         \meaningof{\prefix{p}{x}{\binpar{\meaningof{P}_2 ( n^l, p^l )}{\mathsf{m}(p, n)}}}_4(n, p)
\end{eqnarray*}

As expected, the interpretation of replication makes use of the $D$
operator defined above. Note that name allocation is intertwined with
prefix and new elimination.
         
\begin{eqnarray*}
  \meaningof{P}_3(n, p) 
    & := & 
      \meaningof{\prefix{n}{n'}{\prefix{p}{p'}{(\binpar{\meaningof{P}_2(n',p')}
        {(\binpar{D(x)}{\binpar{\outputp{n}{n'^l}}{\outputp{p}{p'^l}}})})}}}_4. \\
\end{eqnarray*}

To handle prefix elimination we must import most of Yoshida's
algorithm. The key difference is that we must allocate names that
guarantee freshness relative to the free names of the processes being
translated. In those rules below with a ``where'' clause, the specific
choice of combinators in the names is not so important as mentioning
those names and processes with respect to which the name mush be fresh.
For example, in rule $I$, for prefix to a parallel
composition, we must ensure that $v$ and $w$ are fresh with respect to
the names in $P$ and $Q$, and distinct from each other, and then we
must update both the name allocator for each parallel component and
the channels on which fresh names are communicate (so that there is no
interference between the two components) in the recursive call. 

Because the input to $\meaningof{-}_4$ is already in combinator form,
we do not have to import rules 2 -- 4 of her algorithm. Like her, we
assume the following annotations ($+$ stands for the output and $-$
stands for the input), which denote how each name is used in the rules
of interaction:

\[\mathsf{m}(a^{+},v^{\pm});\mathsf{d}(a^{-},b^{+},c^{+});\mathsf{k}(a^{-});\mathsf{fw}(a^{-},b^{+});\mathsf{b}_{\mathsf{l}}(a^{-}b^{+});\mathsf{b}_{\mathsf{r}}(a^{-}b^{-});\mathsf{s}(a^{-}b^{-}c^{+})\]

As above the annotated polarities are preserved by reduction, e.g.
\[\binpar{\mathsf{d}(a^{-},b^{+},c^{+})}{\mathsf{m}(a^{+},v)} \to \binpar{\mathsf{m}(b^{+},v)}{\mathsf{m}(b^{+},v)}\]

Similarly, for economy of expression, we emulate Yoshida's use of
$\emph{c}$ to represent any combinator matching the arity
specification. 

\[\begin{array}{llrl}
(I)&  \meaningof{\prefix{p}{x}{\binpar{P}{Q}}}_4(n, q) 
    & := & 
    (\mathsf{d}(p,v,w)|\binpar{\meaningof{\prefix{v}{x}{P}}_{4}(n_{1}, q_{1})}{\meaningof{\prefix{w}{x}{Q}}_{4}(n_{2}, q_{2}) } \\
    & & & \mbox{where $v = \quotep{(\mathsf{m}(q,\quotep{\binpar{\mathsf{b}_{\mathsf{l}}(q,n)}{\binpar{P}{Q}}}))}$, $w = \quotep{(\mathsf{m}(q,\quotep{\binpar{\mathsf{b}_{\mathsf{r}}(q,n)}{\binpar{P}{Q}}}))}$,} \\
    & & &\mbox{$n_{1} = \quotep{(\binpar{\mathsf{b}_{\mathsf{l}}(v,w)}{\mathsf{m}(q,\quotep{\mathsf{m}(v,*w)})})}$, $n_{2} = \quotep{(\binpar{\mathsf{b}_{\mathsf{r}}(v,w)}{\mathsf{m}(q,\quotep{\mathsf{m}(v,*w)})})}$} \\
    & & &\mbox{$q_{1} = \quotep{(\binpar{\mathsf{b}_{\mathsf{l}}(n_{1},n_{2})}{\mathsf{m}(q,\quotep{\mathsf{m}(v,*w)})})}$, $q_{2} = \quotep{(\binpar{\mathsf{b}_{\mathsf{r}}(n_{1},n_{2})}{\mathsf{m}(q,\quotep{\mathsf{m}(v,*w)})})}$} \\
(V)&  \meaningof{\prefix{p}{x}{\emph{c}(v^{+},w)}}_4(n, q) 
    & := & 
    \mathsf{s}(p,a,v)|\emph{c}(a^{+},\vect{w})
    \mbox{, where $a = @(\binpar{\mathsf{m}(q,n)}{\emph{c}(v^{+},\vect{w})})$ and $x \not\in v : \vect{w}$}\\
(VI)&  \meaningof{\prefix{p}{x}{\emph{c}(v^{-},\vect{w})}}_4(n, q) 
    & := & 
    \mathsf{s}(p,v,a)|\emph{c}(a^{-},\vect{w})
    \mbox{, where $a = @(\mathsf{m}(q,n)|\emph{c}(v^{-},w))$ and $x \not\in v : \vect{w}$} \\
(VII)&  \meaningof{\prefix{p}{x}{\mathsf{m}(v,x)}}_4(n, q) 
    & := & 
    \mathsf{fw}(p,v) \\
(VIII)&  \meaningof{\prefix{p}{x}{\mathsf{fw}(x,v)}}_4(n, q) 
    & := & 
    \mathsf{b}_{\mathsf{l}}(p,v) \\
(IX)&  \meaningof{\prefix{p}{x}{\mathsf{fw}(v,x)}}_4(n, q) 
    & := & 
    \mathsf{b}_{\mathsf{r}}(p,v) \\
(X)&  \meaningof{\prefix{p}{x}{\emph{c}(\vect{v},x^{+},\vect{w})}}_4(n, q) 
    & := & 
    \meaningof{\prefix{p}{x}{\binpar{\mathsf{fw}(a,x)}{\emph{c}(\vect{v},a^{+},\vect{w})}}}_4(n', q)\\
    & & & \mbox{where $a = @(\binpar{\mathsf{m}(q,n)}{\emph{c}(\vect{v},x^{+},\vect{w})})$} \mbox{ and $n'= \quotep{(\mathsf{m}(a,\quotep{\mathsf{m}(q,*n)}))}$} \\
(XI)&  \meaningof{\prefix{p}{x}{\emph{c}(x^{-},\vect{v})}}_4(n, q) 
    & := & 
    \meaningof{\prefix{p}{x}{\binpar{\mathsf{fw}(x,a)}{\emph{c}(a^{-},\vect{v})}}}_4(n', q)\\
    & & & \mbox{where $a = @(\binpar{\mathsf{m}(q,n)}{\emph{c}(x^{-},v)})$} \mbox{ and $n'= \quotep{(\mathsf{m}(a,\quotep{\mathsf{m}(q,*n)}))}$} \\
(XII)&  \meaningof{\prefix{p}{x}{\mathsf{b}_{\mathsf{r}}(v,x^{-})}}_4(n, q) 
    & := & 
    \meaningof{\prefix{p}{x}{(\binpar{\mathsf{d}(v,w_{1},w_{2})}{\binpar{\mathsf{s}(w_{1},x,w_{3})}{\mathsf{b}_{\mathsf{r}}(w_{2},w_{3})}})}}_{4}(n', q) \\
    & & & \mbox{where $w_{1} = \quotep{(\binpar{\mathsf{b}_{\mathsf{l}}(q,n)}{\mathsf{m}(q,\quotep{\mathsf{b}_{\mathsf{r}}(v,x^{-})})})}$, $w_{2} = \quotep{(\binpar{\mathsf{b}_{\mathsf{r}}(q,n)}{\mathsf{m}(q,\quotep{\mathsf{b}_{\mathsf{r}}(v,x^{-})})})}$,} \\
    & & & \mbox{$w_{3} = \quotep{(\binpar{\mathsf{b}_{\mathsf{l}}(p,v)}{\mathsf{m}(w_{1},w_{2}})}$, and $n' = \quotep{(\mathsf{s}(w_{1},w_{2},w_{3}))}$} \\
(XIII)&  \meaningof{\prefix{p}{x}{\mathsf{s}(v,x^{-},w)}}_4(n, q) 
    & := & 
    \meaningof{\prefix{p}{x}{(\binpar{\mathsf{s}(v,w_{1},w_{2})}{\binpar{\mathsf{m}(w_{1},x)}{\mathsf{b}_{\mathsf{l}}(w_{2},w)}})}}_{4}(n', q) \\
    & & & \mbox{where $w_{1} = @(\binpar{\mathsf{b}_{\mathsf{l}}(q,n)}{\mathsf{s}(v,x^{-},w)})$, $w_{2} = @(\binpar{\mathsf{b}_{\mathsf{r}}(q,n)}{\mathsf{s}(v,x^{-},w)})$,} \\
    & & & \mbox{and $n' = \quotep{(\mathsf{m}(w_{1},w_{2}))}$}
\end{array}\]

Note that all $\new$-binding is now interpreted, as in Wischik's
global $\pi$-calculus, as an input \cite{globalpi}.

It is also noteworthy that the translation is dependent on how the
parallel compositions in a process are associated. Different
associations will result in different bindings for $\new$
names. This will not result in different behavior, however:
while the RHO combinators can encode different behaviors depending
on the choice of name, Yoshida's combinators cannot and the
embedding is insensitive to the choice.

\subsubsection{Faithfulness of the translation}

\begin{definition}
An \emph{observation relation}, $\downarrow_{\mathcal{N}}$, over a set
of names, $\mathcal N$, is the smallest relation satisfying the rules
below.

\infrule[Out-barb]{y \in {\mathcal N}, \; x \nameeq y}
    {\mathsf{m}(x-) \downarrow_{\mathcal{N}} x}

\infrule[Par-barb]{\mbox{$P\downarrow_{\mathcal{N}} x$ or $Q\downarrow_{\mathcal N} x$}}
    {\binpar{P}{Q} \downarrow_{\mathcal{N}} x}

We write $P \Downarrow_{\mathcal{N}} x$ if there is $Q$ such that 
$P \wred Q$ and $Q \downarrow_{\mathcal{N}} x$.
\end{definition}

This definition is parametric in the the argument accepted in the
second position in the message combinator, $\mathsf{m}(x-)$, {\em i.e.} the payload
of the message: in the RHO combinators the payload is a process,
while in Yoshida's the payload is a name. Likewise, because the
definition of barbed bisimulation given below is dependent on the
definition of the observation relation, the definition is really a
template for the notion of bisimulation that must be instantiated to
the kind of payload accepted by the message combinator.


\begin{definition}
An  ${\mathcal N}$-\emph{barbed bisimulation} over a set of names, ${\mathcal N}$, is a symmetric binary relation 
${\mathcal S}_{\mathcal N}$ between agents such that $P{\mathcal S}_{\mathcal N}Q$ implies:
\begin{enumerate}
\item If $P \red P'$ then $Q \wred Q'$ and $P'{\mathcal S}_{\mathcal N} Q'$.
\item If $P\downarrow_{\mathcal N} x$, then $Q\Downarrow_{\mathcal N} x$.
\end{enumerate}
$P$ is ${\mathcal N}$-barbed bisimilar to $Q$, written
$P \wbbisim_{\mathcal N} Q$, if $P {\mathcal S}_{\mathcal N} Q$ for some ${\mathcal N}$-barbed bisimulation ${\mathcal S}_{\mathcal N}$.
\end{definition}

\begin{theorem}
  $P \wbbisim_{\pi} Q \iff \ldb P \rdb \wbbisim_{\texttt{FN}(P)} \ldb Q \rdb$.
\end{theorem}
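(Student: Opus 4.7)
The plan is to factor the translation as $\ldb - \rdb = \meaningof{-}_2 \circ \ldb - \rdb_Y$, where $\ldb - \rdb_Y$ is Yoshida's encoding of the {\pic} into her combinators and $\meaningof{-}_2$ is the RHO encoding given above, and to prove operational and barb correspondence for each stage separately, then compose the two to obtain the full statement. Yoshida already establishes faithfulness of her layer (with respect to observations over the name-payload message combinator), so the bulk of the new work concerns $\meaningof{-}_2$, and the target observation set after composing is exactly $\texttt{FN}(P)$ viewed through the name constructors $x^l, x^r$.

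For $\meaningof{-}_2$ I would construct candidate bisimulations in both directions. For the \emph{soundness} direction ($\Rightarrow$), take
\[\mathcal{S} = \{(\meaningof{P}_2(n,p), \meaningof{Q}_2(n,p)) \mid P \wbbisim_Y Q,\ n,p \text{ fresh}\}\]
closed under structural congruence and under the administrative residuals introduced by encodings of $\new$ and replication, and verify that $\mathcal{S}$ is an $\texttt{FN}(P)$-barbed bisimulation: each reduction in $\meaningof{P}_2(n,p)$ either mimics a Yoshida rewrite on $P$ via the corresponding RHO rule, or is an administrative step on the allocator channel $p$ or through $D(x,v,w)$. For the \emph{completeness} direction ($\Leftarrow$), take
\[\mathcal{S}' = \{(P,Q) \mid \meaningof{P}_2(n,p) \wbbisim_{\texttt{FN}(P)} \meaningof{Q}_2(n,p)\}\]
and proceed by case analysis on $P \red P'$, using that each visible redex in the image reflects a Yoshida rewrite.

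The three workhorse lemmas are: a \textbf{Barb Lemma} asserting $\meaningof{P}_2(n,p) \downarrow_a \iff a \in \texttt{FN}(P) \wedge P \downarrow_a$, which depends crucially on the freshness guarantees provided by the $(-)^l, (-)^r$ name constructors (since these construct names that by construction cannot collide with any free name of the process being interpreted, no spurious barbs on free names can arise); an \textbf{Administrative Reduction Lemma} showing that the extra steps generated by the encodings of $\mathsf{new}$ (as an input on the allocator channel $p$, in the style of Wischik's global $\pi$) and of replication (through $D$) are strongly confluent, terminating on guarded contexts, and bisimulation-preserving; and a \textbf{Substitutivity Lemma} showing that $\meaningof{-}_2$ commutes, up to $\wbbisim$, with the name rewritings generated when a binder combinator $\mathsf{b}_l, \mathsf{b}_r$, or $\mathsf{s}$ fires. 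For the $\meaningof{-}_4$ subcomponent, one also needs a termination/invariance argument that prefix elimination preserves observable behavior at each recursive step, which follows from Yoshida's original correspondence since our rules $(V)$--$(XIII)$ are literal ports of hers with a different freshness oracle.

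The main obstacle I anticipate is handling the eager replication encoding. Because $\mathsf{*}_{(x,v,w)} P$ unfolds into $\mathsf{*}_{(x,v,w)} P \mid P$ without any input guard, the image of a replicated process $!P$ admits an infinite sequence of administrative reductions with no counterpart in the source, creating a potential mismatch for barbed bisimulation even though no new visible barbs appear. I would neutralize this either by restricting attention, as the paper suggests, to the input-guarded fragment where a lazier $D$ can be used, or by proving that the divergent residue is $\texttt{FN}(P)$-bisimilar to $0$ on the nose (no free-name barb is ever exposed by an unfolding copy until matched by a message, which the source would also match). A secondary nuisance is that the choice of parallel-composition association changes which RHO names appear as $\new$-bound channels; as the authors remark, Yoshida's combinators cannot distinguish these, so one needs a mild quotient on the target side before comparing images, or equivalently one must check that barbed bisimilarity is insensitive to the particular fresh names minted by the $n^l, n^r, p^l, p^r$ scheme.
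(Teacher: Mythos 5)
Your plan follows essentially the same route as the paper's (admittedly terse) proof sketch: the forward direction comes from the shape of the translation, Yoshida's own correctness result is invoked for the prefix-elimination layer, and the only genuinely new work is for replication (via the calculation with $D(x,v,w)$ following Definition~\ref{replication}) and for $\mathsf{new}$ (via the fact that the allocator-minted names $n^l, n^r, p^l, p^r,\ldots$ are, by the quoting construction, distinct from the free names of $P$ and hence invisible to $\downarrow_{\texttt{FN}(P)}$). Your explicit candidate relations, administrative-reduction lemma and substitutivity lemma are a reasonable fleshing-out of that sketch, and your closing remark about insensitivity to the choice of minted names is exactly the paper's observation about association-dependence of the translation.

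Two corrections, though. First, your Barb Lemma is too strong as stated: the encoding of $(\new\;x)P$ is a prefix-eliminated term guarded on the allocator channel $p$, so a source term with an immediate barb, e.g.\ $(\new\;x)\,\mathsf{m}(v,x)$, translates to a term with no immediate barb on $v$; the barb on $v$ only emerges after administrative allocator reductions. You need the weak, asymmetric form ($\meaningof{P}_2(n,p)\downarrow_a$ implies $P \Downarrow_a$, and $P \downarrow_a$ with $a \in \texttt{FN}(P)$ implies $\meaningof{P}_2(n,p) \Downarrow_a$), which your Administrative Reduction Lemma can then discharge. Second, the eager unfolding through $D$ is not actually an obstacle for the bisimulation as defined: clause (1) only asks the other side to answer with $\wred$, which includes zero steps, and each intermediate state of the three-step unfolding remains related to $\mathsf{*}P$ (equivalently $P \mid \mathsf{*}P$, structurally congruent in the source). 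So you should take your second option --- relate the unfolding residues directly and check no barb in $\texttt{FN}(P)$ is exposed prematurely --- rather than retreat to the input-guarded fragment, which would prove a strictly weaker statement than the theorem as quantified.
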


\begin{proof}[Proof sketch]
  The forward direction $\Rightarrow$ is immediate from the definition
  of the translation. The reverse direction is only interesting in the
  case of $!$ and $\mathsf{new}$. The replication case follows immediately
  from the calculation following definition \ref{replication}. 
  In the $\new$ case, transitions on $\mathsf{new}$-bound
  names will be in one-to-one correspondence with names provided by
  the name parameters of the translation function. By construction,
  these are not observable by the observation relation.
\end{proof}

\begin{remark}
  In light of this theorem, it is worth pointing out that this version
  of the RHO combinators has no rule for \emph{introducing} terms of
  the form $\dropn{x}$. The $b_r$ and $b_l$ combinators introduce new
  names from processes, but the do not introduce new reflection
  terms. Yet, this calculus suffices to faithfully represent the
  Yoshida combinators. This is because the translation function is
  carefully introducing just those terms, via the $D(x,v,w)$ operator,
  guided by the use of replication in the source to the
  translation. 
\end{remark}

\section{Conclusion and future work}
We have shown how to construct a concurrent higher-order combinator
calculus that uses reflection to avoid the necessity for new and bound
names.  Yoshida's combinators, and therefore the asynchronous {\pic},
have a faithful embedding into the calculus. While the results are
interesting in their own right we developed them to serve a larger
purpose. In a forthcoming paper we demonstrate an algorithm taking a
graph-enriched Lawvere theory (representing a formal specification of
a term calculus) and a more vanilla Lawvere theory (representing a
specification of a notion of collection, such as set, or bag, or list,
etc) and produce a type system for the term calculus enjoying
soundness and completeness. Nominal phenomena do not neatly fit inside
the expressive power of graph-enriched Lawvere theories, thus
potentially limiting the scope of the applicability of this
algorithm. However, with the reflective techniques we can extend this
algorithm to cover many languages and term calculi with binding.

At a more foundational level it turns out that the reflective
techniques can be applied to set theory itself deriving a form of FM
set theory in which the ``atoms'' of one set theory are the sets of
another copy of set theory. Roughly speaking, we may imagine that
there are ``red'' sets and ``black'' sets. The ``term constructors''
of red set theory (the curly braces of set notation) can use either
red sets or black sets, which the element-of predicate ($\in$) red set
theory can only inspect red sets and treats black sets as
``atomic''. The symmetric situation holds for black set theory: the
element-of predicate cannot see inside red sets. This framework all
the building blocks necessary to realize a version of FM set theory
from a more traditional set theory with reflection and provides a
bridge between Gabbay and Pitts' account of nominal phenomena and the
reflection-based approach.

\bibliographystyle{amsplain}
\bibliography{rhocomb}

\section{Appendix: Translating the {\pic} into the {\rhoc}}
\subsection{\rhoc}

It is striking to compare the {\rhoc} with the {\pic} as the former is
a vastly simpler theory and yet has enjoys more
features. Specifically, the specification of the {\rhoc}'s structural
equivalence and reduction rules are notably simpler, with one small
technical caveat: name equivalence depends on structural equivalence
which depends upon $\alpha$-equivalence which depends on name
equivalence. Meredith and Radestock show that this cycle terminates
innocently due to the design of the grammar. This technical complexity
seems a small price to pay for a much simpler calculus that enjoys
both higher order communication as well reflection and meta-programming
features.

\begin{mathpar}
\inferrule* [lab=process] {} {P, Q \bc \pzero \;\bm\; \mathsf{for}( y
  \leftarrow x )P \;\bm\; x!(Q) \;\bm\;	\mathsf{*}x \;\bm\; P|Q }
\and
\inferrule* [lab=name] {} {x, y \bc \mathsf{@}P }
\end{mathpar}

\begin{definition}
\emph{Free and bound names} The calculation of the free names of a
process, $P$, denoted $\freenames{P}$ is given recursively by

\begin{mathpar}
  \freenames{\pzero} = \emptyset
  \and
  \freenames{\mathsf{for}(y \leftarrow x)(P)} = \{ x \} \cup \freenames{P}\setminus\{y\}
  \and
  \freenames{x!(P)} = \{ x \} \cup \freenames{P}
  \and
  \freenames{P|Q} = \freenames{P} \cup \freenames{Q}
  \and
  \freenames{\mathsf{}{x}} = \{ x \} \\
\end{mathpar}

An occurrence of $x$ in a process $P$ is \textit{bound} if it is not
free. The set of names occurring in a process (bound or free) is
denoted by $\names{P}$.
\end{definition}

\begin{definition}
  The {\em structural congruence} $\equiv$
  between processes \cite{SangiorgiWalker} is the least congruence containing
  alpha-equivalence and satisfying the commutative monoid laws
  (associativity, commutativity and $\pzero$ as identity) for parallel
  composition $|$.
\end{definition}

\begin{definition}
  The {\em name equivalence} $\nameeq$ is the least congruence
  satisfying these equations
  \begin{mathpar}
  \inferrule*[lab=Quote-drop] {}{ \quotep{\dropn{x}} \nameeq x }
  \and
  \inferrule*[lab=Struct-equiv] { P \scong Q } { \quotep{P} \nameeq \quotep{Q} }
  \end{mathpar}
\end{definition}

\subsection{Operational semantics} 

\begin{mathpar}
  \inferrule* [lab=COMM] {x_{trgt} \nameeq x_{src}} {\mathsf{for}( y \leftarrow x_{trgt} )P | x_{src}!(Q)
  \red P\substn{\mathsf{@}Q}{y}}
  \and
  \inferrule* [lab=PAR]{P \red P'}{P|Q \red P'|Q}
  \and
  \inferrule* [lab=EQUIV]{{P \scong P'} \andalso {P' \red Q'} \andalso {Q' \scong Q}}{P \red Q}
\end{mathpar}

\subsection{\pic}

In this presentation of the {\pic} we update the syntax for
input-guarded processes to reflect the widespread adoption of
comprehension notation in languages ranging from Scala to Python for
use in reactive programming. Here we go with the Scala notation
writing $\mathsf{for}( y \leftarrow x )P$ where Milner might have
written $x?(y)P$. Admittedly, it's somewhat more verbose, but conveys
to a younger generation of programmers more familiar with reactive
programming the intended semantics of the expression. Similarly, since
we keep the output expression $x!(y)$ we avoid collision with the
traditional notation for reflection ($!P$) by writing $\mathsf{*}P$
instead, which is at least somewhat reminiscent of the Kleene star and
the familiar from regular expressions.

\begin{mathpar}
\inferrule* [lab=process] {} {P, Q \bc \pzero \;\bm\; \mathsf{for}( y
  \leftarrow x )P \;\bm\; x!(y) \;\bm\; (\mathsf{new}\;x)P \;\bm\; P|Q \;\bm\;	\mathsf{*}P}
\end{mathpar}

Note that there is no production for $x$'s and $y$'s in the
grammar. This reflects the fact that the {\pic} is \emph{parametric}
in the collection channel names. That collection merely has to be
countably infinite and have an effective equality. As such it is
perfectly reasonable to choose a collection names, namely the names of
the {\rhoc}. We can make this choice without loss of generality
because we can always choose some other countably infinite set with an
effective equality, say $\mathcal{N}$ and then require an invertible
map, $\mathsf{code} : \mathcal{N} \to @\mathsf{Proc}$.

\subsection{Structural congruence}

\begin{definition}
The {\em structural congruence}, $\equiv$, between processes is 
the least congruence closed with respect to
alpha-renaming, satisfying the abelian monoid laws for 
parallel (associativity, commutativity and $\pzero$ 
as identity), and the following axioms:
\begin{enumerate}
\item the scope laws:
\begin{eqnarray*}
 (\mathsf{new}\;x)\pzero & \equiv & \pzero, \\
 (\mathsf{new}\;x)(\mathsf{new}\;x)P & \equiv & (\mathsf{new}\;x)P, \\
 (\mathsf{new}\;x)(\mathsf{new}\;y)P & \equiv & (\mathsf{new}\;y)(\mathsf{new}\;x)P, \\
 P|(\mathsf{new}\;x)Q & \equiv & (\mathsf{new}\;x)P|Q, \; \mbox{\textit{if} }x \not\in \freenames{P} 
\end{eqnarray*}
\item
the recursion law:
\begin{eqnarray*}
 \mathsf{*}P \equiv P|\mathsf{*}P
\end{eqnarray*}
\item
the name equivalence law:
\begin{eqnarray*}
 P \equiv P \substn{x}{y}, \; \mbox{\textit{if} }x \nameeq y
\end{eqnarray*}
\end{enumerate}
\end{definition}

\subsection{Operational semantics} 

\begin{mathpar}
  \inferrule* [lab=COMM] {} {\mathsf{for}( y \leftarrow x )P | x!(v) \red P\substn{v}{y}}
  \and
  \inferrule* [lab=PAR]{P \red P'}{P|Q \red P'|Q}
  \and
  \inferrule* [lab=NEW]{P \red P'}{(\mathsf{new}\;x)P \red (\mathsf{new}\;x)P'}
  \and
  \inferrule* [lab=EQUIV]{{P \scong P'} \andalso {P' \red Q'} \andalso {Q' \scong Q}}{P \red Q}
\end{mathpar}

Here we stick with tradition and write $\wred$ for $\red^*$, and rely
on context to distinguish when $\red$ means reduction in the {\pic}
and when it means reduction in the {\rhoc}. The set of {\pic}
processes will be denoted by $\Proc_{\pi}$.

\subsection{The translation}

The translation will be given by a function, $\meaningof{-}( -, - ) :
\Proc_{\pi} \times \QProc \times \QProc \red \Proc$. The guiding
intuition is that we construct alongside the process a distributed memory
allocator, the process' access to which is mediated through the second argument
to the function. The first argument determines the shape of the memory
for the given allocator.

Given a process, $P$, we pick $n$ and $p$ such that $n \neq p$ and
distinct from the free names of $P$. For example, $n = \quotep{\Pi_{m
\in \freenames{P}}\outputp{m}{\quotep{\pzero}}}$ and $p =
\quotep{\Pi_{m \in
\freenames{P}}\mathsf{for}({\quotep{\pzero}} \leftarrow {m}){\pzero}}$. Then

\begin{equation*}
	\meaningof{P} = \meaningof{P}_{2nd}( n, p )
\end{equation*}

where

\begin{eqnarray*}
   	\meaningof{\pzero}_{2nd} (n,p) & = & \pzero \\
   	\meaningof{x!(@Q)}_{2nd} (n,p) & = & x!(Q) \\
   	\meaningof{\mathsf{for}( y \leftarrow x) P}_{2nd} (n,p) 
   		& = & 
 		\mathsf{for}( y \leftarrow x ) \meaningof{P}_{2nd} (n,p) \\
   	\meaningof{P | Q}_{2nd} (n,p) 
   		& = & 
 		\meaningof{P}_{2nd} (n^{l},p^{l}) | \meaningof{Q}_{2nd} (n^{r},p^{r}) \\
   	\meaningof{\mathsf{*} P}_{2nd} (n,p)
   		& = & x!(\meaningof{P}_{3rd}( n^{r}, p^{r} ))|D(x)|n^{r}!n^{l}|p^{r}!p^{l} \\
   	\meaningof{(\mathsf{new} \; x) P}_{2nd} (n,p) 
   		& = & 
 		\mathsf{for}(x \leftarrow p)\meaningof{P}_{2nd} ( n^{l}, p^{l} )|p!(n) \\
\end{eqnarray*}

and

\begin{eqnarray*}
	x^{l} & \triangleq & \quotep{\outputp{x}{x}} \\
	x^{r} & \triangleq & \quotep{\prefix{x}{x}{\pzero}} \\
	\meaningof{P}_{3rd}( n'', p'' ) 
		& \triangleq & 
			\prefix{n''}{n}{\prefix{p''}{p}{(\binpar{\meaningof{P}_{2nd}(  n, p )}
							        {(\binpar{D(x)}{\binpar{\outputp{n''}{n^{l}}}{\outputp{p''}{p^{l}}}})})}} \\
\end{eqnarray*}

\begin{remark}
	Note that all $\mathsf{new}$-binding is now interpreted, as in Wischik's
	global $\pi$-calculus, as an input guard \cite{globalpi}.
\end{remark}
	
\begin{remark}
	It is also noteworthy that the translation is dependent on how
	the parallel compositions in a process are
	associated. Different associations will result in different
	bindings for $\mathsf{new}$-ed names. This will not result in different
	behavior, however, as the bindings will be consistent
	throughout the translation of the process.
\end{remark}

\begin{theorem}[Correctness]	
	$P \wbbisim_{\pi} Q \iff \ldb P \rdb \wbbisim_{r(\texttt{FN}(P))} \ldb Q \rdb$.
\end{theorem}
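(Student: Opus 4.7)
The plan is to prove the theorem by exhibiting, in each direction, a candidate $\mathcal{N}$-barbed bisimulation and verifying the two clauses of the definition. The heart of the argument is an operational correspondence lemma together with a separation between \emph{observable} reductions (those that simulate a $\pi$-communication on a free name) and \emph{administrative} reductions (those internal to the distributed allocator, to the $D$-based replication unfolding, and to the input-guarded encoding of $\new$). The key fact that makes everything work is that the observation relation is parameterized by $r(\freenames{P})$, while every channel the translation introduces for allocation ($n^l$, $n^r$, $p^l$, $p^r$, $x$, and their descendants) is built as a quote of a process \emph{mentioning} one of the original free names or a previously allocated name, and hence by the freshness argument following the definition of $x^l,x^r$ is not name-equivalent to any name in $r(\freenames{P})$.

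First I would establish an \textbf{operational correspondence lemma}: (i) if $P \red_\pi P'$, then $\meaningof{P}_{2nd}(n,p) \wred \meaningof{P'}_{2nd}(n',p')$ for some refreshed allocator parameters $n',p'$; (ii) if $\meaningof{P}_{2nd}(n,p) \red R$ in the \rhoc, then either the reduction is administrative and $R \wred \meaningof{P}_{2nd}(n'',p'')$ for some $n'',p''$, or it mimics a $\pi$-communication on a free name and $R \wred \meaningof{P'}_{2nd}(n'',p'')$ for some $P'$ with $P \red_\pi P'$. Clause (i) is a routine induction on the structure of $P$, the only nontrivial cases being $\mathsf{*}P$ (which follows from the calculation right after Definition~\ref{replication}, showing $\meaningof{\mathsf{*}P}_{2nd}(n,p) \wred \meaningof{\mathsf{*}P \juxtap P}_{2nd}(\cdot,\cdot)$ up to structural congruence) and $(\new x)P$ (where the fresh-like input on $p$ fires after $p!(n)$, yielding the substitution required by scope extrusion). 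Clause (ii) proceeds by case analysis on which \textsc{Comm} rule fired, using name equivalence and the injectivity of the superscript constructors to rule out accidental communications across subterms.

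Second I would establish \textbf{barb correspondence}: $P \downarrow_\pi x$ iff $\meaningof{P}_{2nd}(n,p) \downarrow_{r(\freenames{P})} r(x)$. This is a straightforward induction once one observes that outputs in $\pi$ translate to $\mathsf{m}$-combinators on precisely the same channel, and that all administratively introduced outputs carry channels of the form $\quotep{\cdots}$ that mention a process and are therefore distinct from $r(\freenames{P})$ by the freshness argument. Combined with (ii) above, this gives barb correspondence for the weak observation $\Downarrow$ as well. With these two ingredients the forward direction $\Rightarrow$ follows by showing that
$$\mathcal{S}_{\Rightarrow} \;=\; \{(\meaningof{P}_{2nd}(n,p),\ \meaningof{Q}_{2nd}(n,p)) : P \wbbisim_\pi Q\}$$
closed under administrative reduction on either side is an $r(\freenames{P})$-barbed bisimulation. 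The reverse direction $\Leftarrow$ uses the dual relation $\mathcal{S}_{\Leftarrow} = \{(P,Q) : \meaningof{P} \wbbisim_{r(\freenames{P})} \meaningof{Q}\}$ and appeals to clause (i) to match $\pi$-reductions and to barb correspondence to match barbs.

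The main obstacle is clause (ii) of the operational correspondence, specifically the $\new$ case. The encoding $\meaningof{(\new x)P}_{2nd}(n,p) = \mathsf{for}(x \leftarrow p)\meaningof{P}_{2nd}(n^l,p^l) \juxtap p!(n)$ relies on the fact that the only $p$-output available is $p!(n)$, so the fresh-like name bound to $x$ is forced to be $n$, and subsequent uses of $x$ inside $\meaningof{P}_{2nd}(n^l,p^l)$ never collide with the outer free names because $n$ was chosen fresh. Showing that no \emph{other} process in a larger context can interfere with $p$ requires a careful invariant: $p$ is a locally generated allocator channel and the translation threads it compositionally so that sibling components receive disjoint allocators $p^l, p^r$. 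This invariant, together with the replication unfolding which also runs away eagerly on its own allocator channel $x$, is the heart of the argument and is the step that most warrants careful bookkeeping.
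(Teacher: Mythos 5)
Your plan is the standard detailed elaboration of what the paper only gestures at (its proof is literally ``an easy structural induction'' plus the remark that the distinguishing contexts are those that can see the allocator names $n,p$ and are excluded by the choice of observation set), and your central idea---operational correspondence plus barb correspondence, with allocator channels invisible to $\downarrow_{r(\texttt{FN}(P))}$ because they are quotes mentioning the translated process---is exactly the right one. However, two of your lemma statements are too strong as written and would fail. First, clause (ii) of your operational correspondence asserts that a reduct of $\meaningof{P}_{2nd}(n,p)$ can always be brought back to a literal translation $\meaningof{P'}_{2nd}(n'',p'')$. This is not achievable: the $D$-based replication encoding, as the paper itself notes, ``runs away, unfolding $\mathsf{*}P$ eagerly,'' so after any step the term carries extra unfolded copies and spent allocator messages (the consumed $\outputp{p}{n}$, the $\outputp{n^{r}}{n^{l}}$ traffic) that never disappear; moreover $\pi$-reduction is closed under structural congruence (associativity of $|$, the scope laws, $(\new\;x)(\new\;x)P \equiv (\new\;x)P$), under which the translation is \emph{not} invariant---the paper's own example shows $\meaningof{(\new\;v)(\new\;v)\outputp{u}{v}}$ and $\meaningof{(\new\;v)\outputp{u}{v}}$ are different terms related only by the restricted bisimilarity. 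So the correspondence must be stated up to an administrative expansion (or directly up to $\wbbisim_{r(\texttt{FN}(P))}$), and your candidate relation $\mathcal{S}_{\Rightarrow}$ must be closed under that coarser relation, not merely under ``administrative reductions on either side.''

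Second, your barb correspondence as a strong iff is false in the preservation direction: $P=(\new\;y)\outputp{x}{y}$ has $P\downarrow_{\pi}x$, but its translation $\mathsf{for}(y \leftarrow p)\meaningof{\outputp{x}{y}}_{2nd}(n^{l},p^{l})\;|\;\outputp{p}{n}$ exhibits no immediate barb on $x$, since the output sits under the input guard on $p$; only $\meaningof{P}\Downarrow_{r(\texttt{FN}(P))} r(x)$ holds. What you need (and what suffices for clause 2 of the bisimulation) is strong reflection together with weak preservation, and in the reverse direction the reflection argument must be applied along whole weak sequences with the weakened correspondence above. A small further slip: this theorem concerns the appendix translation into the {\rhoc}, where $\outputp{x}{y}$ translates to an output $x!(Q)$, not to an $\mathsf{m}$-combinator; the combinator clauses you cite belong to the other encoding in the body of the paper. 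With these statements weakened and the bookkeeping you already flag for the allocator invariant on $p$, your architecture goes through and is considerably more explicit than the paper's sketch.
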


\emph{Proof sketch}: An easy structural induction.

One key point in the proof is that there are contexts in the {\rhoc}
that will distinguish the translations. But, these are contexts that
can see the fresh names, $n$, and the communication channel, $p$, for
the `memory allocator'. These contexts do not correspond to any
observation that can be made in the {\pic} and so we exclude them in
the {\rhoc} side of our translation by our choice of ${\mathsf N}$
for the bisimulation. This is one of the technical motivations behind
our introduction of a less standard bisimulation.

\begin{example}
	In a similar vein consider, for an appropriately chosen $p$ and $n$ we have
	\begin{equation*}
		\meaningof{(\mathsf{new}\;v)(\mathsf{new}\;v) u!(v)} = \mathsf{for}(v \leftarrow p)(\mathsf{for}({v} \leftarrow {\quotep{p!(p)}})(u!(v)|\quotep{p!(p)})!(\quotep{n!(n)})) | p!(n)
	\end{equation*}
	and
	\begin{equation*}
		\meaningof{(\mathsf{new}\;v)u!(v)} = \mathsf{for}(v \leftarrow p)(u!(v) )|p!(n)
	\end{equation*}

	Both programs will ultimately result in an output of a single
	fresh name on the channel $u$. But, the former program will
	consume more resources. Two names will be allocated; two memory
	requests will be fulfilled. The {\rhoc} can see this, while the
	{\pic} cannot. In particular, the {\pic} requires that
	$(\mathsf{new}\;x)(\mathsf{new}\;x)P \equiv (\mathsf{new} x)P$.

	Implementations of the {\pic}, however, having the property that
	$(\mathsf{new}\;x)P$ involves the allocation of memory for the
	structure representing the channel $x$ come to grips with the
	implications this requirement has regarding memory management. If
	memory is allocated upon encountering the $\mathsf{new}$-scope, there are
	situations where the left-hand side of the equation above will
	fail while the right-hand will succeed. Remaining faithful to the
	equation above requires that such implementations are
	\textit{lazy} in their interpretation of $(\mathsf{new}\;x)P$, only
	allocating the memory for the fresh channel at the first moment
	when that channel is used.

	Having a detailed account of the structure of names elucidates
	this issue at the theoretical level and may make way to offer
	guidance to implementations.
\end{example}

\end{document}